\documentclass[journal]{IEEEtran}

\usepackage{graphicx}
\usepackage{cite}
\usepackage{amsmath,amssymb,amsfonts}
\usepackage{booktabs}
\usepackage{threeparttable}
\usepackage{multirow}
\usepackage{array}
\usepackage{tabularx}
\usepackage{xurl}
\newtheorem{theorem}{\bfseries Theorem}

\DeclareMathOperator{\conv}{conv}
\begin{document}

\title{Converter-Aware Coordination of Computing Flexibility in Behind-the-Meter Data Centers}

\author{Ruolan Cui,~\IEEEmembership{Student Member,~IEEE,}
Ruizhang Yang,~\IEEEmembership{Member,~IEEE,}
Chenxi Hu,~\IEEEmembership{Member,~IEEE,}\\
Francis Yunhe Hou,~\IEEEmembership{Fellow,~IEEE}
\thanks{
Ruolan Cui, Ruizhang Yang, and Francis Yunhe Hou are with the Department of Electrical and Computer Engineering, The University of Hong Kong, Hong Kong.

Chenxi Hu is with the Department of Computer Science, The Hang Seng University of Hong Kong, Hong Kong.
}
}

\maketitle

\begin{abstract}
Data centers are emerging as large electricity consumers whose demand can be reshaped through workload scheduling. However, computing flexibility in behind-the-meter facilities does not automatically translate into electrically realizable power flexibility. Ignoring converter and network operating limits may cause converter overloads, protection trips, and interruptions to power supply and computing services. This paper proposes a converter-aware intraday scheduling methodology.
First, a task-level mixed-integer linear formulation preserves synchronous training configurations and completion requirements, while Minkowski-sum aggregation and Shapley--Folkman convexification yield a linear inference-pool representation without unit-level binary variables. The latter admits a conditional relative relaxation-gap bound that decreases inversely with pool size.
Second, a four-node electrical model links computing demand, on-site injections, and grid exchange to converter and network limits through lifted voltage variables and second-order cone relaxation.
Finally, a mixed-integer model coordinates workloads, photovoltaic generation, energy storage, and grid purchases.
Case studies show that electrical limits reshape scheduling and renewable utilization, while on-site resources reduce grid dependence. Under the studied conditions, fixed training cannot support full inference service within voltage limits, whereas coordinated scheduling achieves negligible inference shortage and passes verification against the original alternating-current model. These findings demonstrate the role of computing--electrical coordination in realizing the operational value of workload flexibility.
\end{abstract}

\begin{IEEEkeywords}
Behind-the-meter data center, computing flexibility, converter-aware scheduling, electrical feasibility, SOCP relaxation.
\end{IEEEkeywords}

\IEEEpeerreviewmaketitle

\section{Introduction}

\IEEEPARstart{D}{ata} centers (DCs) are emerging as large, computation-driven loads whose power demand can be reshaped through workload scheduling~\cite{nerc_gaps_2026,vercellino2026,colangelo2025}. Unlike traditional loads, their power profiles depend on service-request arrivals, computing-resource allocation, and power-control strategies~\cite{vercellino2026,colangelo2025}. This dependence creates an opportunity to coordinate computing services with power-system operation, but workload schedulability alone does not establish whether the corresponding power trajectory is electrically feasible.

The form of computing flexibility differs across workload types~\cite{nerc_gaps_2026,colangelo2025}. Training tasks exhibit strong intertemporal schedulability. Their power profiles can be adjusted through their start times, execution progress, and parallel scale~\cite{choukse2025}. Inference tasks have strict real-time requirements, but their power--throughput relationship can still be adjusted through resource-pool allocation and the proportions of different operating configurations~\cite{vercellino2026,colangelo2025}. In addition, field experiments show that software workload orchestration can achieve sustained power reductions in graphics processing unit (GPU) clusters~\cite{colangelo2025}. These features allow DCs to be treated as dispatchable computing loads~\cite{epri2025,cui2026}, while requiring distinct representations of synchronous training decisions and weakly coupled inference allocation.

This scheduling flexibility becomes particularly valuable in behind-the-meter (BTM) DCs, where site-level resources can be coordinated with computing loads. The BTM configuration highlighted by the North American Electric Reliability Corporation (NERC) enables computing loads, on-site photovoltaic (PV) generation, energy storage (ES), and power-electronic interfaces to operate behind the point of common coupling (PCC)~\cite{nerc_large_loads2025,li2018}. These components jointly determine the site's net power exchange, power balance, and electrical operating state~\cite{li2018,pullaguram2022}. Consequently, the electrical realization of computing flexibility depends on coordinated decisions across workloads, on-site resources, and power-electronic interfaces.

Schedules that satisfy computing-service requirements and site energy balance may still violate internal voltage, branch-capacity, PCC exchange, or converter physical limits~\cite{pullaguram2022,hou2024}. Energy-only optimization therefore cannot by itself establish the electrically realizable value of computing flexibility. In the considered site, training and inference occupy separate computing pools but share electrical infrastructure. Moving training toward PV-rich periods must therefore be coordinated with concurrent inference demand and interface loading. Addressing this gap requires both a workload representation that preserves the relevant scheduling choices and an electrical representation that connects these choices to site operation.
The literature is reviewed from three related perspectives: (1) computing-load models, (2) converter-aware security constraints, and (3) intraday scheduling.

First, existing DC computing-load models mainly include fine-grained, facility-level, and data-driven models~\cite{vercellino2026,radovanovic2022}. Fine-grained models provide detailed task and device representations, but they introduce many discrete variables when embedded into large-scale scheduling~\cite{cui2026,li2024_computation}. Facility-level models are low-dimensional and easy to optimize, but their coarse representations have difficulty describing computing-service constraints and task-scheduling decisions~\cite{zhang2024}. Data-driven models have strong prediction capability, but their physical interpretability and compatibility with optimization models are often limited~\cite{radovanovic2022,zhou2025}. The remaining modeling need is to preserve the distinct scheduling structures of training and inference without carrying all unit-level discrete decisions into site-level optimization~\cite{vercellino2026}.

Second, research on the secure operation of microgrids and power-electronic interfaces mainly includes static security regions, optimal power flow, and converter physical constraints~\cite{pullaguram2022,hou2024}. Static security regions enable efficient security screening but are typically not integrated with computing-load scheduling~\cite{hou2024,abdollahi2025}. Optimal power-flow models provide detailed physical descriptions but become computationally challenging when alternating-current (AC) power flow and power-electronic interfaces are considered~\cite{pullaguram2022,bahrami2014}. Converter physical constraints describe interface capabilities~\cite{baeckeland2024}, but these capabilities must be linked to computing-task decisions when DC flexibility is scheduled. The issue for BTM DC operation is therefore not the availability of converter limits in isolation, but their integration with workload decisions and on-site energy coordination.

Third, existing optimal scheduling studies for DCs mainly include task-time scheduling and grid-coordinated scheduling~\cite{colangelo2025,cui2026}. The former adjusts power timing through task deferral, continuous execution, or interruption and restart~\cite{li2024_computation,li2015}. The latter enables DCs to participate in demand response, active distribution-network optimization, or ancillary-service markets as controllable loads~\cite{cui2026,li2024_computation}. These studies demonstrate the operational flexibility of DCs. However, translating this flexibility into BTM operation requires joint consideration of heterogeneous workload structures, on-site energy resources, and converter and network limits. The remaining need is a tractable scheduling model that preserves training completion and period-wise inference-service requirements while accounting for the electrical conditions under which workload adjustments are used.

The central question is therefore how computing-side schedulability can be converted into electrically feasible operation at a BTM DC. Two coupled modeling difficulties must be addressed: preserving task-level training choices without retaining all discrete inference-unit decisions, and embedding nonlinear electrical relationships into a multi-period mixed-integer problem. This paper addresses these difficulties through task-aware workload models and a converter-aware electrical representation. The main contributions are as follows:

\begin{itemize}

\item Task-aware linear representations preserve the different scheduling structures of training and inference. Synchronous training retains discrete configurations and continuous within-segment throughput, whereas weakly coupled inference is represented by a resource-pool linear model through feasible-set aggregation and resource-pool convexification. The inference representation scales with the number of operating segments rather than computing units. Under the stated assumptions, its relative relaxation gap admits an $\mathcal{O}(1/N)$ upper bound, where $N$ is the number of inference computing units.

\item Computing-load decisions are linked to site electrical operation through a four-node equivalent model. This representation couples computing power, on-site injections, and PCC exchange with voltage, PCC and branch-capacity, and converter current and modulation limits. Building on existing converter formulations, the current and modulation limits are expressed in exact conic form, while lifted voltage products are relaxed through second-order cone programming (SOCP). The resulting representation enables tractable electrical coordination within the intraday scheduling model.

\item An intraday model jointly schedules computing workloads, PV, ES, and grid purchases under training-completion, inference-service, and SOCP-relaxed electrical constraints. Case studies examine changes in workload and energy decisions under electrical limits. Separate diagnostics evaluate inference aggregation and electrical relaxation, and an ex-post check verifies the baseline commands against the original AC model.

\end{itemize}

Fig.~\ref{fig:framework} summarizes the methodology. Sections II and III develop the computing-load and electrical models, respectively; Section IV combines them in intraday scheduling. Section V presents the case studies, and Section VI concludes.

\begin{figure}[!t]
\centering
\includegraphics[width=0.48\textwidth]{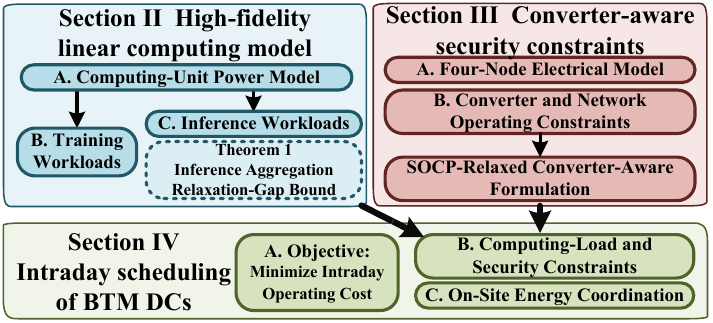}
\caption{Overview of the proposed methodology. Section II develops task-aware training and inference load models, Section III constructs the SOCP-relaxed converter-aware electrical constraint set, and Section IV coordinates computing workloads, on-site energy resources, and grid exchange.}
\label{fig:framework}
\end{figure}

\section{High-Fidelity Linear Computing Model}\label{sec:computing_models}

This section develops task-aware linear power--throughput models. Here, fidelity refers to retaining discrete operating modes and workload service requirements under the stated assumptions. Strongly coupled training is represented at the task level, whereas weakly coupled inference is convexified at the resource-pool level.

\subsection{Computing-Unit Power Model}

Computing units in a DC include GPUs and application-specific integrated circuits. The power of a single computing unit is affected by computing throughput and operating mode~\cite{barroso2007,fan2007}. Under a fixed operating mode, the computing-unit power can usually be approximated by a linear function of throughput. When the unit switches among idle, low-load, high-load, and power-capping states, its power--throughput relationship becomes piecewise linear and nonconvex~\cite{colangelo2025,cui2026,zhou2025}.

The piecewise-linear power--throughput characteristics of computing unit $i$ are represented as
\begin{equation}
\label{eq:micro}
\begin{aligned}
& p_i = \sum_{m=1}^{S} \left(A_m \gamma_{i,m} + B_m y_{i,m}\right) \\
& \sum_{m=1}^{S} y_{i,m} = 1,\quad y_{i,m}\in\{0,1\} \\
& l_m y_{i,m} \leq \gamma_{i,m} \leq u_m y_{i,m},\quad \forall m \\
& \gamma_i = \sum_{m=1}^{S} \gamma_{i,m}.
\end{aligned}
\end{equation}
Here, $i$ indexes computing units, and $m=1,\ldots,S$ indexes the $S$ operating segments. For segment $m$, $A_m$ and $B_m$ denote the slope and intercept of the power--throughput relationship, while $l_m$ and $u_m$ denote its lower and upper throughput limits, respectively. The binary variable $y_{i,m}$ indicates whether unit $i$ operates in segment $m$, and $\gamma_{i,m}$ is the corresponding throughput. Finally, $\gamma_i$ and $p_i$ denote the total throughput and power consumption of computing unit $i$, respectively. The coefficients and throughput limits are calibrated separately for training and inference; pool superscripts are omitted where the context is unambiguous.

The computing workloads considered here comprise training and inference. Training loads emphasize synchronous parallelism and progress consistency within the same task~\cite{hu2025_traffic,hu2021_gpu,narayanan2021_megatron}. Inference loads emphasize request decomposability and real-time response~\cite{hu2026_robust,burstgpt2024,colangelo2025}. These differences determine the smallest schedulable unit, aggregation scale, and latency requirement. Therefore, task-aware linear representations are developed separately for training and inference loads.

\subsection{Training Workloads}\label{sec:training_model}

Training uses large datasets to iteratively update model parameters and weights. A single training job is usually allocated to multiple computing units for parallel execution. This process has strong synchronization coupling. Computing units assigned to the same training job must maintain consistent progress and operating states, and the task advances through gradient synchronization, parameter exchange, and communication coordination. Therefore, the smallest schedulable unit for training workloads is a training-task cluster~\cite{hu2021_gpu,narayanan2021_megatron}.

Based on these features, a task-arrival-driven, task-level aggregation model is used for training loads~\cite{vercellino2026,alibaba_gpu2023}. Let $\mathcal{T}=\{1,\ldots,T\}$ be the set of scheduling periods, where $T$ is the number of periods, and let $\mathcal{K}$ be the set of training tasks. For task $k\in\mathcal{K}$, $a_k$, $D_k$, and $d_k$ denote its arrival-period index, permitted deadline offset in periods, and deadline-period index, respectively. The parameter $A_t^{\mathrm{tr}}$ is the number of training tasks arriving in period $t$, $\mathbb{I}(\cdot)$ is the indicator function, and $\mathcal{T}_k$ is the schedulable window of task $k$:
\begin{equation}
\label{eq:training_window}
\begin{aligned}
&A_t^{\mathrm{tr}}=\sum_{k\in\mathcal{K}}\mathbb{I}(a_k=t),\\
&d_k=\min\{a_k+D_k,T\},\\
&\mathcal{T}_k=\{t\in\mathcal{T}\mid a_k\le t\le d_k\}.
\end{aligned}
\end{equation}
The window includes both endpoints and contains $d_k-a_k+1$ periods. Within this window, each training task selects one configuration in each period from the candidate set $\mathcal{C}_k$, where $c=0$ denotes the idle configuration. Each configuration specifies the number of computing units and the operating segment, while throughput remains adjustable within that segment. The binary variable $z_{k,c,t}$ equals one if task $k$ uses configuration $c$ in period $t$:
\begin{equation}
\label{eq:training_config_selection}
\sum_{c\in\mathcal{C}_k}z_{k,c,t}=1,\quad z_{k,c,t}\in\{0,1\},\quad \forall k\in\mathcal{K},\ t\in\mathcal{T}_k.
\end{equation}

Let $n_c$ denote the number of computing units assigned under configuration $c$, and let $\Gamma_{k,c,t}$ and $P_{k,c,t}$ be its selected throughput and information technology (IT) power. The task-level quantities are
\begin{equation}
\label{eq:training_period_variables}
\begin{aligned}
&\Gamma_{k,t}=\sum_{c\in\mathcal{C}_k}\Gamma_{k,c,t},\\
&n_{k,t}=\sum_{c\in\mathcal{C}_k}n_c z_{k,c,t},\\
&P_{k,t}=\sum_{c\in\mathcal{C}_k}P_{k,c,t}.
\end{aligned}
\end{equation}
These task-dependent variables are zero outside $\mathcal{T}_k$. The idle configuration has $n_0=\Gamma_{k,0,t}=P_{k,0,t}=0$.

For each active configuration, let $\eta_c$ be the parallel efficiency and $m(c)$ the operating segment. The selected throughput satisfies
\begin{equation}
\label{eq:training_throughput_config}
\eta_c n_c l_{m(c)}z_{k,c,t}
\le \Gamma_{k,c,t}
\le \eta_c n_c u_{m(c)}z_{k,c,t},\quad c\ne 0.
\end{equation}
The associated power is
\begin{equation}
\label{eq:training_power_config}
P_{k,c,t}=\frac{A_{m(c)}}{\eta_c}\Gamma_{k,c,t}
+n_c B_{m(c)}z_{k,c,t},\quad c\ne 0.
\end{equation}
Configuration selection is discrete, while throughput within the selected segment is continuous.

Let $W_k$ be the required workload of task $k$ and $\Delta t$ be the duration of one scheduling period. Throughput is measured in tera floating-point operations per second (TFLOP/s), $\Delta t$ in hours, and $W_k$ in $\mathrm{TFLOP/s}\cdot\mathrm{h}$. The minimum work requirement for each task is enforced by
\begin{equation}
\label{eq:training_completion}
\sum_{t\in\mathcal{T}_k}\Gamma_{k,t}\Delta t\ge W_k,\quad \forall k\in\mathcal{K}.
\end{equation}
Training completion under this requirement is evaluated in Section~\ref{sec:verification}.

Let $N_t^{\mathrm{tr}}$ denote the number of training computing units available in period $t$. The resource-capacity constraint is
\begin{equation}
\label{eq:training_resource_capacity}
\sum_{k\in\mathcal{K}}n_{k,t}\le N_t^{\mathrm{tr}},\quad \forall t\in\mathcal{T}.
\end{equation}

The task-level formulation represents feasible training schedules and their associated power consumption. It assumes preemptible and malleable training tasks and neglects checkpointing, restart, and configuration-switching overheads. When embedded into the intraday model, training configurations are selected jointly with inference allocation and on-site energy decisions.

\subsection{Inference Workloads}\label{sec:inference_pool}

Inference requests are latency-sensitive and are usually served within the current scheduling period~\cite{colangelo2025,burstgpt2024}. Unlike training tasks, inference requests are weakly coupled and asynchronous. Hence, computing units in the inference pool can be allocated independently at the aggregation level. This subsection builds a tractable aggregate power--throughput model for inference workloads.

Consider an inference pool with $N$ homogeneous computing units. The operating-mode set includes the deactivated, zero-power mode. Let $\mathcal{P}_i$ denote the feasible power--throughput set of unit $i$ induced by \eqref{eq:micro}. Each point $(\gamma_i,p_i)\in\mathcal{P}_i$ is an implementable operating point. Given the aggregate inference throughput requirement $\Gamma^{\mathrm{req}}$, the exact minimum pool power $P_N$ is
\begin{equation}
\label{eq:cluster_mip}
\begin{aligned}
P_N=\min_{\{(\gamma_i,p_i)\}} \quad
& \sum_{i=1}^{N} p_i \\
\mathrm{s.t.}\quad
& \sum_{i=1}^{N}\gamma_i=\Gamma^{\mathrm{req}},\\
& (\gamma_i,p_i)\in\mathcal{P}_i,\quad i=1,\ldots,N.
\end{aligned}
\end{equation}
Although \eqref{eq:cluster_mip} is compact, each set $\mathcal{P}_i$ still contains the discrete segment-selection variables in \eqref{eq:micro}. Therefore, directly embedding \eqref{eq:cluster_mip} into system-level scheduling remains computationally expensive for large inference pools.

Let $\mathcal{X}_N$ denote the exact aggregate feasible set of the inference pool. It contains all aggregate profiles obtained by summing one feasible profile from each unit,
\begin{equation}
\label{eq:minkowski}
\mathcal{X}_N=\bigoplus_{i=1}^{N}\mathcal{P}_i,
\end{equation}
where $\oplus$ denotes set addition.
The unit set $\mathcal{P}_i$ is generally nonconvex because of discrete segment selection. For sums of bounded nonconvex sets, the number of terms requiring convexification is bounded by the dimension of the aggregate space rather than by $N$~\cite{20,26}.

Therefore, the binary segment selections are relaxed into continuous segment proportions. Let $\alpha_m$ be the proportion of inference units assigned to segment $m$, $\Gamma_m$ be their aggregate throughput, and $P_N^{\mathrm{LP}}$ be the minimum power obtained from the linear programming (LP) relaxation. The convexified aggregation model is
\begin{equation}
\label{eq:agg_lp}
\begin{aligned}
P_N^{\mathrm{LP}}=\min_{\{\alpha_m,\Gamma_m\}} \quad
& \sum_{m=1}^{S}
\left(A_m\Gamma_m+B_m N\alpha_m\right)\\
\mathrm{s.t.}\quad
& \sum_{m=1}^{S}\Gamma_m=\Gamma^{\mathrm{req}},\\
& l_m N\alpha_m\le \Gamma_m\le u_m N\alpha_m,
\quad \forall m,\\
& \sum_{m=1}^{S}\alpha_m=1,\quad
0\le \alpha_m\le 1,\quad \forall m.
\end{aligned}
\end{equation}
Compared with \eqref{eq:cluster_mip}, model~\eqref{eq:agg_lp} has a dimension proportional to the number of segments $S$, rather than the number of units $N$. It can therefore be embedded into the intraday scheduling model without unit-level binary variables.

For heterogeneous inference pools, the same aggregation is applied to each homogeneous hardware group. The relaxed aggregate region is obtained by summing the convexified envelopes of all groups.

\subsubsection{Aggregation Relaxation-Gap Bound}

The continuous aggregation model~\eqref{eq:agg_lp} is a relaxation of the microscopic model~\eqref{eq:cluster_mip}. Let
\begin{equation}
\varepsilon_{\mathrm{rel}}
=
\frac{P_N-P_N^{\mathrm{LP}}}{P_N}
\end{equation}
denote the relative relaxation gap. The following theorem gives a sufficient condition for a uniform bound at fixed aggregate throughput.

\begin{theorem}[Inference Aggregation Relaxation-Gap Bound]
\label{thm:inference_error_bound}
Consider the microscopic model~\eqref{eq:cluster_mip} and the aggregated model~\eqref{eq:agg_lp}. Suppose that the homogeneous per-unit feasible set $\mathcal{P}_i$ is a fixed finite union of closed segments, its feasible throughput values fill an interval, and $p_{\min}\le p_i\le p_{\max}$ with bounds independent of $N$. Let model~\eqref{eq:cluster_mip} be feasible for
\begin{equation}
\label{eq:scaled_inference_demand}
\Gamma^{\mathrm{req}}=N\rho,
\end{equation}
where $\rho$ is a fixed per-unit demand level. Then, with $C_0=2(p_{\max}-p_{\min})$ independent of $N$,
\begin{equation}
0\le P_N-P_N^{\mathrm{LP}}\le C_0.
\end{equation}
If the exact optimal power satisfies $P_N\ge N\underline p$ for a positive lower bound $\underline p$, independent of $N$, on average power per computing unit, then
\begin{equation}
\label{eq:rel_error_bound}
\varepsilon_{\mathrm{rel}}
=
\frac{P_N-P_N^{\mathrm{LP}}}{P_N}
\le
\frac{C}{N},
\end{equation}
where $C$ is a constant independent of $N$.
\end{theorem}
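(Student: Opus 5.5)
We first establish the lower inequality $P_N^{\mathrm{LP}}\le P_N$. Take an optimal solution $\{(\gamma_i,p_i)\}$ of \eqref{eq:cluster_mip} with segment indicators $y_{i,m}$. Set $\alpha_m=\frac1N\sum_i y_{i,m}$ and $\Gamma_m=\sum_i\gamma_{i,m}$. This pair is feasible for \eqref{eq:agg_lp} and has the same objective value, because summing the per-unit constraints in \eqref{eq:micro} reproduces the aggregated constraints. Hence $0\le P_N-P_N^{\mathrm{LP}}$.

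For the upper bound we use the Shapley--Folkman argument cited before \eqref{eq:agg_lp}, in the aggregate space $\mathbb{R}^2$ of (throughput, power). The optimum of \eqref{eq:agg_lp} equals the minimum of $\sum_i p_i$ over the convexified sum $\bigoplus_i \conv\mathcal{P}_i=N\conv\mathcal{P}$ subject to $\sum_i\gamma_i=\Gamma^{\mathrm{req}}$. To see this, note that for homogeneous units $\conv\mathcal{P}$ is the set of mixtures of the segment sets. A point of $N\conv\mathcal{P}$ can then be written as $\sum_m N\alpha_m$ times an average point of segment $m$, which is exactly the parametrization by $(\alpha_m,\Gamma_m)$. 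Let $x^\star=(\Gamma^{\mathrm{req}},P_N^{\mathrm{LP}})$ be the LP optimum. By the Shapley--Folkman lemma in dimension $2$, $x^\star=\sum_i x_i$ with $x_i\in\conv\mathcal{P}$ for all $i$ and $x_i\in\mathcal{P}$ for all but at most two indices, say $i\in\{1,2\}$.

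The main step, and the expected obstacle, is repairing these two nonconvex terms without changing total throughput. The tool is the hypothesis that the feasible throughput values of $\mathcal{P}$ fill an interval. Each $x_i$ with $i\in\{1,2\}$ has a throughput $\gamma_i$ in the convex hull of the throughput projection, and that convex hull is this interval. So we can choose $x_i'=(\gamma_i,p_i')\in\mathcal{P}$ with the same throughput. The power bounds give $p_i'-p_i\le p_{\max}-p_{\min}$, since $p_i\ge p_{\min}$ on the convex hull and $p_i'\le p_{\max}$. Replacing $x_1,x_2$ by $x_1',x_2'$ produces a feasible point of \eqref{eq:cluster_mip} with throughput exactly $\Gamma^{\mathrm{req}}$ and power at most $P_N^{\mathrm{LP}}+2(p_{\max}-p_{\min})$. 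This yields $P_N-P_N^{\mathrm{LP}}\le C_0$.

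Two subtleties remain. The first is ensuring the LP optimum is attained; this holds because the feasible set of \eqref{eq:agg_lp} is compact and is nonempty by the assumed feasibility of \eqref{eq:cluster_mip}. The second is the role of the scaling $\Gamma^{\mathrm{req}}=N\rho$. Its purpose is that the constant $C_0$ does not depend on $N$, and the argument above gives this for every $N$. Finally, with $P_N\ge N\underline p$ we get $\varepsilon_{\mathrm{rel}}\le C_0/(N\underline p)$, so \eqref{eq:rel_error_bound} holds with $C=C_0/\underline p$.
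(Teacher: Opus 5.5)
Your proof is correct and follows the same route as the paper's. The steps match: the LP is a relaxation, which gives the lower bound. Shapley--Folkman in the two-dimensional throughput--power space leaves at most two points in $\conv(\mathcal{P}_i)$. Each is repaired at the same throughput using interval coverage, at a cost of at most $p_{\max}-p_{\min}$ per point.

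You also fill in details the paper leaves implicit: the map $\alpha_m=\frac1N\sum_i y_{i,m}$, the identification of the LP region with $N\conv\mathcal{P}$, and attainment of the optimum. One small correction: the scaling $\Gamma^{\mathrm{req}}=N\rho$ is not what makes $C_0$ independent of $N$. It only makes the hypothesis $P_N\ge N\underline p$ natural for the relative bound.
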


\begin{IEEEproof}
Since~\eqref{eq:agg_lp} is a relaxation of~\eqref{eq:cluster_mip},
\begin{equation}
P_N^{\mathrm{LP}}\le P_N.
\end{equation}
Apply the convexification result in~\cite{20,26} to the relaxed optimal point $(\Gamma^{\mathrm{req}},P_N^{\mathrm{LP}})$ in the two-dimensional throughput--power space. It admits a decomposition with at most two unit points in $\conv(\mathcal{P}_i)$ rather than $\mathcal{P}_i$. Since each unit's feasible throughput values fill an interval, each exceptional point can be replaced by a feasible unit point with the same throughput. This preserves $\sum_i\gamma_i=\Gamma^{\mathrm{req}}$ exactly and increases total power by at most $2(p_{\max}-p_{\min})$. The resulting point is feasible for~\eqref{eq:cluster_mip}; hence,
\begin{equation}
0\le P_N-P_N^{\mathrm{LP}}\le C_0.
\end{equation}
Using $P_N\ge N\underline p$ gives
\begin{equation}
\varepsilon_{\mathrm{rel}}
=
\frac{P_N-P_N^{\mathrm{LP}}}{P_N}
\le
\frac{C_0}{N\underline p}
=
\frac{C}{N},
\end{equation}
where $C=C_0/\underline p$.
\end{IEEEproof}

The adjacent-segment curves used in the case study satisfy the interval-coverage condition. Theorem~\ref{thm:inference_error_bound} characterizes the inference power-aggregation gap at fixed throughput, with the relative bound holding when average per-unit power remains positive.

\section{Converter-Aware Security Constraints}\label{sec:electrical_model}

This section links computing and energy decisions to four-node AC power flow, network operating limits, and converter capabilities. Here, security is characterized by the steady-state voltage, network-capacity, and converter operating limits. Lifted voltage variables and SOCP relaxation provide a tractable scheduling representation. Electrical quantities use a consistent per-unit system; MW/MVAr scheduling powers are normalized by $S_{\mathrm{base}}$.

\subsection{Four-Node Electrical Model}

NERC illustrates the BTM configuration of emerging large loads, in which a load may be supplied partially or fully by co-located generation behind the grid interface~\cite{nerc_large_loads2025}. Following this configuration, the considered BTM DC site comprises computing loads, on-site renewable generation, ES, power-electronic interfaces, and AC tie branches~\cite{li2018,pullaguram2022}. To capture their main steady-state electrical coupling while retaining computational tractability, the site is represented by the four-node equivalent model shown in Fig.~\ref{fig:btm_dc_system}. The electrical-node, physical-branch, and converter-interface sets are
\begin{equation}
\label{eq:electrical_sets}
\begin{aligned}
\mathcal{N} &= \{0,\mathrm{b},1,2\},\\
\mathcal{L} &= \{(0,\mathrm{b}),(1,\mathrm{b}),(2,\mathrm{b})\},\\
\mathcal{R} &= \{1,2\}.
\end{aligned}
\end{equation}
Each element of $\mathcal{L}$ denotes one physical branch in a
reference orientation. The corresponding directed-arc set is
\[
\overrightarrow{\mathcal{L}}
=
\mathcal{L}
\cup
\{(\mathrm{b},0),(\mathrm{b},1),(\mathrm{b},2)\}.
\]

Node 0 represents the external grid at the PCC. The coupling bus labeled POB in Fig.~\ref{fig:btm_dc_system} is indexed by $\mathrm{b}$ and has zero net injection. Its voltage is $V_t\angle\theta_t$, with $V_t=V_{\mathrm{b},t}$ and $\theta_t=\theta_{\mathrm{b},t}$. The tie branches connecting nodes 0, 1, and 2 to this bus have reactances $X_0$, $X_1$, and $X_2$, respectively. Node 1 represents the
AC terminal of the grid-forming inverter-based resource (IBR) interface, whose active-power and voltage-magnitude setpoints are selected by the site scheduler~\cite{rathnayake2021,teodorescu2011,yazdani2010}.
Node 2 represents the AC terminal of the grid-following DC
load-side interface and is modeled as a bus with specified active and reactive power. The tie branches
use lossless series-admittance equivalents; shunt elements, non-unit
tap ratios, and converter losses are neglected.

\begin{figure}[!t]
\centering
\includegraphics[width=0.95\linewidth]{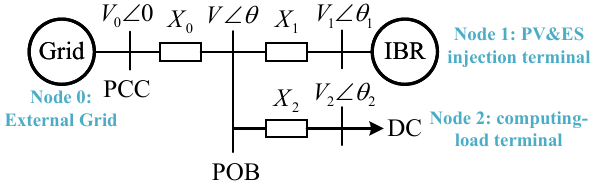}
\caption{Four-node equivalent electrical model of the behind-the-meter data center. Node 0 represents the external grid at the PCC, the POB is the zero-injection coupling bus indexed by $\mathrm{b}$, node 1 is the grid-forming IBR terminal with scheduled active-power and voltage setpoints, and node 2 is the DC load-side terminal modeled as a bus with specified active and reactive power.}
\label{fig:btm_dc_system}
\end{figure}

The scheduling command in period $t$ is
\begin{equation}
\label{eq:site_vectors}
u_t=
\begin{bmatrix}
P_{1,t} & \nu_{1,t} & P_{2,t} & Q_{2,t}
\end{bmatrix}^{\mathrm{T}}.
\end{equation}
Here, $P_{1,t}$ and $\nu_{1,t}=V_{1,t}^{2}$ are the active-power and squared-voltage setpoints of the IBR interface, while $P_{2,t}$ and $Q_{2,t}$ are the DC load-side injections. The IBR reactive power $Q_{1,t}$ and the remaining bus voltages and angles are determined by the electrical operating state. Positive active and reactive powers denote injections into the on-site AC network, so DC consumption gives negative injections at node 2. The squared IBR voltage is optimized within its limits, and the voltage-magnitude command is recovered as $V_{1,t}=\sqrt{\nu_{1,t}}$.

Let
$\dot V_{a,t}=V_{a,t}e^{\mathrm{j}\theta_{a,t}}$
denote the complex voltage at node $a$, where
$\theta_{0,t}=0$ is the angle reference and $V_0$ is fixed.
The four-node network satisfies the standard nodal AC
power-flow relations over $\mathcal{L}$, with zero injection at
coupling bus $\mathrm{b}$. These network relations are represented using lifted
voltage variables and branch-wise SOCP relaxation in
Section~\ref{sec:lifted_pf}.

\subsection{Converter and Network Operating Constraints}

For a given scheduling command $u_t$, the four-node AC model must satisfy the power-flow equations and the considered equipment and network limits. These constraints describe the steady-state electrical operating conditions used in the scheduling model.

Let $\underline V$ and $\overline V$ denote the lower and upper voltage-magnitude limits. The controllable on-site voltages must satisfy
\begin{equation}
\label{eq:exact_voltage_limits}
\underline V
\leq V_{a,t}
\leq \overline V,
\quad
a\in\{\mathrm{b},1,2\}.
\end{equation}

Let $\overline S^{\mathrm{PCC}}$ denote the PCC apparent-power limit. The grid exchange must satisfy
\begin{equation}
\label{eq:exact_pcc_limit}
\left\|
\begin{matrix}
P_t^{\mathrm{PCC}}\\
Q_t^{\mathrm{PCC}}
\end{matrix}
\right\|_2
\leq
\overline S^{\mathrm{PCC}}.
\end{equation}
The implemented branch ratings are enforced at the grid and converter terminals, corresponding to the orientations in $\mathcal{L}$. Let $\overline S_{ab}$ denote the rating. The branch-capacity constraints are
\begin{equation}
\label{eq:exact_branch_limit}
\left\|
\begin{matrix}
P_{ab,t}\\
Q_{ab,t}
\end{matrix}
\right\|_2
\leq
\overline S_{ab},
\quad
(a,b)\in\mathcal{L}.
\end{equation}

Each converter interface must satisfy its current and modulation capabilities~\cite{hou2024}. Let $I_r$ denote the converter current limit, $E_r$ the internal-voltage limit, and $X_r^{\mathrm{c}}$ the converter phase reactor reactance. The internal phase-reactor reactance $X_r^{\mathrm{c}}$ is distinct from the on-site tie reactance $X_r$ shown in Fig.~\ref{fig:btm_dc_system}. The converter overcurrent and overmodulation limits are
\begin{equation}
\label{eq:converter_physical_limits}
\begin{aligned}
&P_{r,t}^{2}+Q_{r,t}^{2}
\leq V_{r,t}^{2}I_r^{2},\\
&P_{r,t}^{2}
+\left(
Q_{r,t}+\frac{V_{r,t}^{2}}{X_r^{\mathrm{c}}}
\right)^{2}
\leq
\frac{V_{r,t}^{2}E_r^{2}}{(X_r^{\mathrm{c}})^{2}},
\quad r\in\mathcal{R}.
\end{aligned}
\end{equation}
The first inequality limits the converter current. The second constrains the internal voltage required to produce the terminal voltage and commanded complex power within the modeled modulation capability~\cite{teodorescu2011,yazdani2010}.

Branch-angle differences are also bounded by $|\theta_{a,t}-\theta_{b,t}|\le\overline\delta$. The nodal AC relations and these operating limits define the electrical conditions to be embedded in scheduling.

\subsection{SOCP-Relaxed Converter-Aware Formulation}\label{sec:lifted_pf}

To obtain a tractable electrical representation, this paper builds on the SOCP formulation for inverter-security-constrained optimal power flow developed in our group's prior work~\cite{hou2024}. The converter current and modulation limits are reformulated as second-order-cone constraints, while the nonconvex AC voltage products are represented using lifted variables and branch-wise second-order-cone relaxation~\cite{pullaguram2022,hou2024}.

Define the lifted voltage variables as
\begin{equation}
\label{eq:lifted_voltage}
W_{ab,t}
=
\dot V_{a,t}\dot V_{b,t}^{*},
\quad
a,b\in\mathcal{N},
\end{equation}
where $(\cdot)^*$ denotes complex conjugation. Following~\cite{hou2024}, Hermitian symmetry and nonnegative diagonal elements are imposed by
\begin{equation}
\label{eq:hermitian_voltage_matrix}
W_{ba,t}=W_{ab,t}^{*},
\qquad
W_{aa,t}\in\mathbb{R}_{+}.
\end{equation}
The nonconvex coupling among voltage products is then handled using a branch-wise second-order-cone relaxation.

Let $S_{a,t}=P_{a,t}+\mathrm{j}Q_{a,t}$ be the complex-power injection at node $a$, and let $Y_{ab}$ be the $(a,b)$ element of the nodal admittance matrix. The lifted nodal power-flow equations are
\begin{equation}
\label{eq:lifted_power_flow}
S_{a,t}
=
\sum_{b\in\mathcal{N}}
Y_{ab}^{*}W_{ab,t},
\quad
a\in\mathcal{N}.
\end{equation}
The zero injection at coupling bus $\mathrm{b}$ is imposed by
\begin{equation}
\label{eq:pob_zero_injection}
S_{\mathrm{b},t}=0.
\end{equation}
The interface injections are linked to the scheduling commands by
\begin{equation}
\label{eq:interface_injection_definition}
S_{r,t}=P_{r,t}+\mathrm{j}Q_{r,t},
\quad
r\in\mathcal{R}.
\end{equation}
The squared IBR voltage decision and fixed PCC voltage are represented by
\begin{equation}
\label{eq:ibr_voltage_command}
W_{11,t}=\nu_{1,t},
\end{equation}
and
\begin{equation}
\label{eq:pcc_voltage_fixed}
W_{00,t}=V_0^{2},
\end{equation}
respectively.

For each physical branch $(a,b)\in\mathcal{L}$, the nonconvex voltage-product equality $|W_{ab,t}|^{2}=W_{aa,t}W_{bb,t}$ is replaced by $|W_{ab,t}|^{2}\leq W_{aa,t}W_{bb,t}$. Its second-order-cone representation is
\begin{equation}
\label{eq:soc_power_flow_relaxation}
\left\|
\begin{matrix}
2\operatorname{Re}(W_{ab,t})\\
2\operatorname{Im}(W_{ab,t})\\
W_{aa,t}-W_{bb,t}
\end{matrix}
\right\|_2
\leq
W_{aa,t}+W_{bb,t},
\quad
(a,b)\in\mathcal{L}.
\end{equation}

The branch-angle bounds are retained through the linear inequalities
\begin{equation}
\label{eq:angle_bounds}
\begin{aligned}
&\operatorname{Re}(W_{ab,t})\ge 0,\\
&|\operatorname{Im}(W_{ab,t})|
\le \tan(\overline\delta)\operatorname{Re}(W_{ab,t}),
\quad (a,b)\in\mathcal{L}.
\end{aligned}
\end{equation}

\subsubsection{Operating Limits and Scheduling Interface}\label{sec:conic_limits}

The voltage-limit constraints become
\begin{equation}
\label{eq:node_voltage_limits}
\underline V^{2}
\leq
W_{aa,t}
\leq
\overline V^{2},
\quad
a\in\{\mathrm{b},1,2\}.
\end{equation}
Because $S_{0,t}$ is the power injected by the external grid node into the site, the PCC exchange is
\begin{equation}
\label{eq:pcc_power_definition}
P_t^{\mathrm{PCC}}=\operatorname{Re}(S_{0,t}),
\qquad
Q_t^{\mathrm{PCC}}=\operatorname{Im}(S_{0,t}).
\end{equation}
The PCC apparent-power limit is therefore
\begin{equation}
\label{eq:pcc_apparent_power_limit}
\left\|
\begin{matrix}
P_t^{\mathrm{PCC}}\\
Q_t^{\mathrm{PCC}}
\end{matrix}
\right\|_2
\leq
\overline S^{\mathrm{PCC}}.
\end{equation}

Let $y_{ab}$ denote the series admittance of branch $(a,b)$, which is distinct from the nodal-admittance entry $Y_{ab}$. The complex power from node $a$ to node $b$ is
\begin{equation}
\label{eq:branch_power_definition}
S_{ab,t}
=
(W_{aa,t}-W_{ab,t})y_{ab}^{*},
\quad
(a,b)\in\overrightarrow{\mathcal{L}}.
\end{equation}
The terminal-specific branch-capacity limits are
\begin{equation}
\label{eq:branch_capacity_limits}
\left\|
\begin{matrix}
\operatorname{Re}(S_{ab,t})\\
\operatorname{Im}(S_{ab,t})
\end{matrix}
\right\|_2
\leq
\overline S_{ab},
\quad
(a,b)\in\mathcal{L}.
\end{equation}

Using $W_{rr,t}=V_{r,t}^{2}$, the converter overcurrent limit in \eqref{eq:converter_physical_limits} is exactly represented by
\begin{equation}
\label{eq:soc_overcurrent}
\left\|
\begin{matrix}
2P_{r,t}\\
2Q_{r,t}\\
W_{rr,t}-I_r^{2}
\end{matrix}
\right\|_2
\leq
W_{rr,t}+I_r^{2},
\quad
r\in\mathcal{R}.
\end{equation}
The converter overmodulation limit is exactly represented by
\begin{equation}
\label{eq:soc_overmodulation}
\left\|
\begin{matrix}
2P_{r,t}\\
2\left(
Q_{r,t}+\dfrac{W_{rr,t}}{X_r^{\mathrm{c}}}
\right)\\
W_{rr,t}-\dfrac{E_r^{2}}{(X_r^{\mathrm{c}})^{2}}
\end{matrix}
\right\|_2
\leq
W_{rr,t}+\frac{E_r^{2}}{(X_r^{\mathrm{c}})^{2}},
\quad
r\in\mathcal{R}.
\end{equation}
Only the branch voltage-product equalities are relaxed; the voltage, angle, terminal-capacity, current, and modulation limits retain their stated forms.

Let $S_t$ collect the nodal and branch complex-power variables. The SOCP-relaxed outer approximation of the converter-aware electrical feasible set is
\begin{equation}
\label{eq:socp_secure_region}
\widehat{\Omega}_t
=
\left\{
u_t
\ \middle|
\exists S_t,W_t,\;
\eqref{eq:hermitian_voltage_matrix}
\text{--}
\eqref{eq:soc_overmodulation}
\right\}.
\end{equation}
The scheduling model enforces
\begin{equation}
\label{eq:scheduling_secure_constraint}
u_t\in\widehat{\Omega}_t,
\quad
\forall t\in\mathcal{T}.
\end{equation}

This set provides a tractable outer approximation for intraday scheduling. The baseline scheduling commands are further checked against the original AC equations in Section~\ref{sec:verification}.

\section{Intraday Scheduling of BTM DCs}

This section coordinates computing workloads, PV, ES, and grid purchases to minimize operating cost subject to workload-service requirements, site energy balance, ES dynamics, and the SOCP-relaxed electrical constraints.

\subsection{Objective: Minimize Intraday Operating Cost}

The decision variables coordinate workload execution, on-site resource use, and grid exchange~\cite{qureshi2009,kim2017,chen2020_dr}.

Let $P_t^{\mathrm{grid}}$ denote grid-purchase power, $P_t^{\mathrm{ch}}$ and $P_t^{\mathrm{dis}}$ the ES charging and discharging powers, and $\Gamma_t^{\mathrm{sh}}$ the unserved inference throughput in period $t$. The parameters $\lambda_t$, $c^{\mathrm{ES}}$, and $c^{\mathrm{shed}}$ are the electricity price, ES cycling cost, and inference-shortage penalty, respectively. The intraday objective is
\begin{equation}
\label{eq:objective}
\min \sum_{t\in\mathcal{T}}\left(\lambda_tP_t^{\mathrm{grid}}+c^{\mathrm{ES}}(P_t^{\mathrm{ch}}+P_t^{\mathrm{dis}})+c^{\mathrm{shed}}\Gamma_t^{\mathrm{sh}}\right)\Delta t.
\end{equation}
Equation~\eqref{eq:objective} represents the intraday variable operating costs of grid energy purchases, ES throughput, and unserved inference.

The combined IT-side demand of training and inference determines the AC-side DC load, linking workload decisions to grid purchases, ES operation, PV utilization, and electrical operating limits.

\subsection{Computing-Load and Security Constraints}\label{sec:load_security}

\subsubsection{Training and Inference Scheduling}

Training and inference loads use independent resource pools. Training tasks follow Section~\ref{sec:training_model}; inference loads follow the period-wise aggregation model in Section~\ref{sec:inference_pool}. The two load types do not compete for computing units, but their active-power demands are combined on the AC side of the DC.

Inference service is represented by aggregate throughput requirements within each scheduling period. Let $N$ denote the number of inference units, $\mathcal{M}$ the set of their operating segments, $\Gamma_{m,t}$ the throughput assigned to segment $m$, $\Gamma_t^{\mathrm{req}}$ the required inference throughput, and $\Gamma_t^{\mathrm{sh}}$ the unserved portion in period $t$:
\begin{equation}
\label{eq:inference_balance}
\begin{aligned}
&\sum_{m\in\mathcal{M}}\Gamma_{m,t}+\Gamma_t^{\mathrm{sh}}=\Gamma_t^{\mathrm{req}},\quad \forall t\in\mathcal{T},\\
&\Gamma_t^{\mathrm{sh}}\ge 0,
\quad \forall t\in\mathcal{T}.
\end{aligned}
\end{equation}
Let $\alpha_{m,t}$ denote the proportion of inference units assigned to segment $m$ in period $t$, and let $P_t^{\mathrm{inf}}$ denote the resulting inference IT power. The period-wise aggregation constraints are
\begin{equation}
\label{eq:inference_aggregation}
\begin{aligned}
&\sum_{m\in\mathcal{M}}\alpha_{m,t}=1,\quad \forall t\in\mathcal{T},\\
&0\le\alpha_{m,t}\le 1,\quad \forall m\in\mathcal{M},\ t\in\mathcal{T},\\
&l_mN\alpha_{m,t}\le \Gamma_{m,t}\le u_mN\alpha_{m,t},\quad \forall m\in\mathcal{M},\ t\in\mathcal{T},\\
&P_t^{\mathrm{inf}}=\sum_{m\in\mathcal{M}}\left(A_m\Gamma_{m,t}+B_mN\alpha_{m,t}\right),\quad \forall t\in\mathcal{T}.
\end{aligned}
\end{equation}
The aggregate training IT power $P_t^{\mathrm{tr}}$ is
\begin{equation}
\label{eq:training_power_sum}
P_t^{\mathrm{tr}}=\sum_{k\in\mathcal{K}}P_{k,t},\quad \forall t\in\mathcal{T}.
\end{equation}
The total IT-side computing power $P_t^{\mathrm{IT}}$ is
\begin{equation}
\label{eq:computing_power_sum}
P_t^{\mathrm{IT}}=P_t^{\mathrm{tr}}+P_t^{\mathrm{inf}},\quad \forall t\in\mathcal{T}.
\end{equation}

Let $\overline P^{\mathrm{tr}}$ and $\overline P^{\mathrm{inf}}$ be the IT power budgets of the two pools. Their additional capacity limits are
\begin{equation}
\label{eq:pool_power_limits}
0\le P_t^{\mathrm{tr}}\le\overline P^{\mathrm{tr}},\qquad
0\le P_t^{\mathrm{inf}}\le\overline P^{\mathrm{inf}},\quad \forall t\in\mathcal{T}.
\end{equation}

\subsubsection{AC-Side Load and Electrical Constraints}

Let $P_t^{\mathrm{dc}}$ be the AC-side DC load and $\overline P^{\mathrm{dc}}$ its capacity. A fixed cooling-equivalent power usage effectiveness (PUE) maps IT demand to site demand:
\begin{equation}
\label{eq:ac_dc_load}
P_t^{\mathrm{dc}}=\mathrm{PUE}P_t^{\mathrm{IT}},\quad 0\le P_t^{\mathrm{dc}}\le\overline P^{\mathrm{dc}},\quad \forall t\in\mathcal{T}.
\end{equation}
Based on the server--cooling energy relationship used in energy-aware virtual machine scheduling, this paper uses a site-level equivalent PUE to map IT power to AC-side load~\cite{radovanovic2022,eavms2025}. The supply-air temperature is set to $25^{\circ}\mathrm{C}$, corresponding to an equivalent PUE of about 1.21~\cite{eavms2025}. Let $\mathrm{COP}(T_{\mathrm{sup}})$ denote the cooling coefficient of performance (COP) at supply-air temperature $T_{\mathrm{sup}}$. The equivalent PUE is
\begin{equation}
\label{eq:pue_equiv}
\mathrm{PUE}=1+\frac{1}{\mathrm{COP}(T_{\mathrm{sup}})}.
\end{equation}

Let $P_t^{\mathrm{IBR}}$ denote the active-power injection from the on-site IBR and $Q_t^{\mathrm{dc}}$ the DC reactive load. The site control vector in period $t$ is
\begin{equation}
\label{eq:scheduling_control_vector}
u_t=\begin{bmatrix}P_t^{\mathrm{IBR}}/S_{\mathrm{base}} & \nu_{1,t} & -P_t^{\mathrm{dc}}/S_{\mathrm{base}} & -Q_t^{\mathrm{dc}}/S_{\mathrm{base}}\end{bmatrix}^{\mathrm{T}}.
\end{equation}
Thus, $P_{2,t}=-P_t^{\mathrm{dc}}/S_{\mathrm{base}}$ and $Q_{2,t}=-Q_t^{\mathrm{dc}}/S_{\mathrm{base}}$.
The command in each scheduling period is constrained by the SOCP-relaxed region:
\begin{equation}
\label{eq:scheduling_secure_region}
u_t\in\widehat{\Omega}_t,\quad \forall t\in\mathcal{T}.
\end{equation}

\subsection{On-Site Energy Coordination}\label{sec:site_energy}

The site is supplied by the external grid, PV, and ES~\cite{li2018,guo2011}.

Let $P_t^{\mathrm{pv}}$ denote the utilized PV power. Under the lossless network and converter assumptions, the active power balance of the BTM DC site is
\begin{equation}
\label{eq:site_power_balance}
\begin{aligned}
&P_t^{\mathrm{grid}}+P_t^{\mathrm{pv}}+P_t^{\mathrm{dis}}=P_t^{\mathrm{dc}}+P_t^{\mathrm{ch}},\quad \forall t\in\mathcal{T},\\
&0\le P_t^{\mathrm{grid}}\le\overline P^{\mathrm{grid}},\quad \forall t\in\mathcal{T}.
\end{aligned}
\end{equation}
Here, $\overline P^{\mathrm{grid}}$ is the active-import limit; exports are prohibited. The per-unit PCC exchange is linked to grid purchases by
\begin{equation}
\label{eq:pcc_grid_connection}
S_{\mathrm{base}}P_t^{\mathrm{PCC}}=P_t^{\mathrm{grid}}=P_t^{\mathrm{dc}}-P_t^{\mathrm{IBR}},\quad \forall t\in\mathcal{T}.
\end{equation}
The equivalent active injection of on-site IBR resources is
\begin{equation}
\label{eq:ibr_active_injection}
P_t^{\mathrm{IBR}}=P_t^{\mathrm{pv}}+P_t^{\mathrm{dis}}-P_t^{\mathrm{ch}},\quad \forall t\in\mathcal{T}.
\end{equation}
If the DC load-side reactive power is approximated using the fixed power factor $\mathrm{pf}^{\mathrm{dc}}$, then
\begin{equation}
\label{eq:dc_reactive_power}
Q_t^{\mathrm{dc}}=P_t^{\mathrm{dc}}\tan\left(\arccos(\mathrm{pf}^{\mathrm{dc}})\right),\quad \forall t\in\mathcal{T}.
\end{equation}
These relations couple workload demand and on-site injections to grid exchange. PV that cannot be consumed or stored is curtailed.

\subsubsection{PV and ES Operation}

Let $\bar P_t^{\mathrm{pv}}$ be the available PV power and $P_t^{\mathrm{cur}}$ the curtailed PV power. PV operation satisfies
\begin{equation}
\label{eq:pv_operation}
\begin{aligned}
&0\le P_t^{\mathrm{pv}}\le \bar P_t^{\mathrm{pv}},\quad \forall t\in\mathcal{T},\\
&P_t^{\mathrm{cur}}=\bar P_t^{\mathrm{pv}}-P_t^{\mathrm{pv}},\quad P_t^{\mathrm{cur}}\ge 0,
\quad \forall t\in\mathcal{T}.
\end{aligned}
\end{equation}

Let $E_t$ be the energy stored in the ES, $\bar E$ the energy capacity, $\eta^{\mathrm{ch}}$ and $\eta^{\mathrm{dis}}$ the charging and discharging efficiencies, and $\underline{s}$ and $\overline{s}$ the lower and upper state-of-charge (SOC) limits. The ES energy recursion and SOC limits are~\cite{guo2011,nrel_atb2024}
\begin{equation}
\label{eq:ES_energy_state}
\begin{aligned}
&E_t=E_{t-1}+\eta^{\mathrm{ch}}P_t^{\mathrm{ch}}\Delta t-\frac{P_t^{\mathrm{dis}}\Delta t}{\eta^{\mathrm{dis}}},\quad \forall t\in\mathcal{T},\\
&\underline{s}\bar E\le E_t\le \overline{s}\bar E,\quad \forall t\in\mathcal{T}.
\end{aligned}
\end{equation}
Let $\bar P^{\mathrm{ch}}$ and $\bar P^{\mathrm{dis}}$ denote the ES charging and discharging power limits. Charging and discharging are represented by continuous powers subject to a limit on their combined normalized utilization:
\begin{equation}
\label{eq:ES_power_limits}
\begin{aligned}
&0\le P_t^{\mathrm{ch}}\le \bar P^{\mathrm{ch}},\quad \forall t\in\mathcal{T},\\
&0\le P_t^{\mathrm{dis}}\le \bar P^{\mathrm{dis}},\quad \forall t\in\mathcal{T},\\
&\frac{P_t^{\mathrm{ch}}}{\bar P^{\mathrm{ch}}}+\frac{P_t^{\mathrm{dis}}}{\bar P^{\mathrm{dis}}}\le 1,\quad \forall t\in\mathcal{T}.
\end{aligned}
\end{equation}
Let $s_0$ and $s_T$ denote the initial and terminal SOC values. To avoid excessive ES depletion at the end of the scheduling horizon,
\begin{equation}
\label{eq:ES_boundary_energy}
\begin{aligned}
&E_0=s_0\bar E,\\
&E_T=s_T\bar E.
\end{aligned}
\end{equation}

\section{Case Study}

This section examines coordinated operation, electrical and resource ablations, capacity-mix sensitivity, and the two modeling approximations.

\subsection{Case Parameters and Input Construction}

The case uses a 24-h intraday scheduling horizon divided into 96 15-min periods. The electricity-price and PV-output profiles are based on data from the Electric Reliability Council of Texas for June 1, 2026~\cite{ercot_prices,ercot_solar}. The DC site capacity is 80 MW, the equivalent PUE is 1.21, and the power factor is 0.98.

The computing hardware is the NVIDIA A100 Tensor Core GPU. The theoretical peak throughput of one GPU is 312 tera floating-point operations per second (TFLOP/s), and the achievable throughput is set to 52\% of the theoretical peak~\cite{nvidia_a100,narayanan2021_megatron}. In the resource organization, one GPU node contains 8 GPUs, and one cluster contains 32 GPU nodes. Training tasks use clusters as the basic scheduling unit. The candidate parallel scales are \{0, 1, 2, 4, 8, 16, 32, 64, 128, 256\}. Inference tasks use GPU nodes as the aggregation unit. Training-task samples come from the Alibaba GPU Cluster Dataset 2023, and the inference request curve comes from the BurstGPT workload trace for large language model serving~\cite{alibaba_gpu2023,burstgpt2024}. The training share $\chi$ allocates a fraction $\chi$ of the site IT power budget $\overline P^{\mathrm{dc}}/\mathrm{PUE}$ to training, with the remaining fraction $1-\chi$ assigned to inference.

The energy system includes 80 MW of installed PV capacity and 240 MWh of ES. The maximum charging and discharging powers of the ES are both 60 MW. The SOC range is 10\%-90\%, the initial and terminal SOC values are both 50\%, and the round-trip efficiency is 85\%~\cite{nrel_atb2024}. The ES cycling cost is set to 5 U.S.\ dollars per megawatt-hour (USD/MWh). The inference-shortage penalty is 50~USD/($\mathrm{TFLOP/s}\cdot\mathrm{h}$). Operating costs are reported in thousands of U.S.\ dollars (kUSD), where 1~kUSD equals $10^3$~USD. For electrical security, the node-voltage range is 0.95--1.05 per unit (p.u.), the interface current and modulation-voltage limits are 1.05 p.u. and 1.20 p.u., and the PCC apparent-power limit is 80 MVA~\cite{advancedenergy_psu,greatriver204}.
The electrical model uses an 80-MVA base, with
$X_0=0.15$, $X_1=X_2=X_1^{\mathrm{c}}=X_2^{\mathrm{c}}=0.27$
p.u., and $\overline{\delta}=35^{\circ}$.

The computations are performed on a 64-bit Windows workstation equipped with an Intel Core i5-10505 processor operating at 3.20 GHz and 16 GB of memory. All scenarios are solved in the same environment to ensure comparability.

\subsection{Baseline Scheduling Results}

The baseline uses $\chi=0.5$, PV, ES, and the SOCP-relaxed electrical constraints. Six multi-period training tasks account for 99.23\% of the training workload and provide most of its temporal flexibility.

Fig.~\ref{fig:baseline_scheduling_results}(a) shows that PV supplies the computing load and charges ES during PV-rich periods, reducing grid imports; grid purchases and ES discharge increase when PV output is insufficient.

\begin{figure}[!t]
\centering
\includegraphics[width=0.48\textwidth]{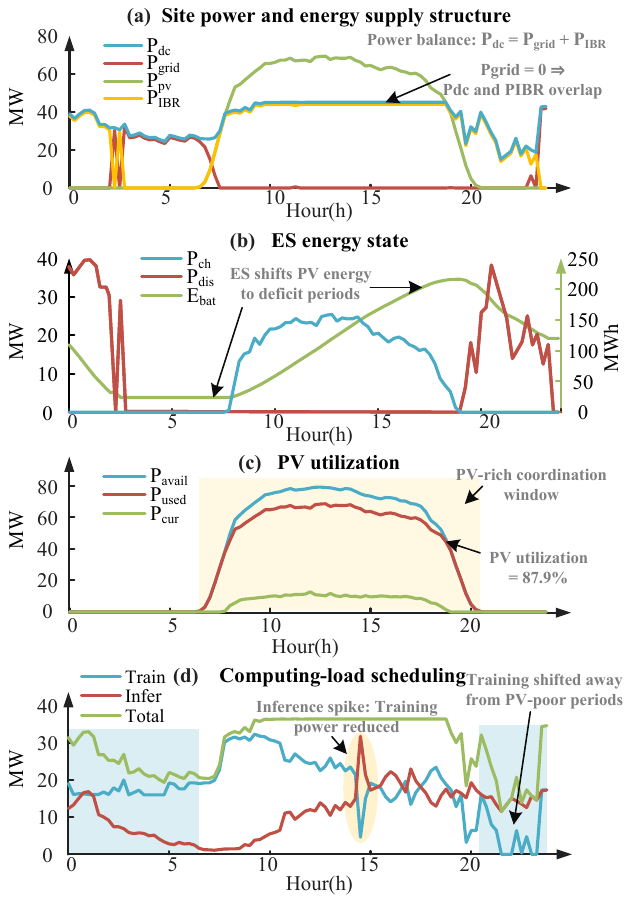}
\caption{Baseline scheduling results under a 50\%/50\% training--inference site-power budget split, with PV, ES, and the SOCP-relaxed converter-aware constraints: (a) DC load, grid purchase, PV utilization, and net IBR injection; (b) ES charging and discharging powers on the left axis and stored energy on the right axis; (c) available, utilized, and curtailed PV power; and (d) training, inference, and total IT power.}
\label{fig:baseline_scheduling_results}
\end{figure}

Panels (b) and (c) of Fig.~\ref{fig:baseline_scheduling_results} show ES shifting PV energy to deficit periods, with some curtailment under the site limits; no period has both charging and discharging above $10^{-6}$~MW. Fig.~\ref{fig:baseline_scheduling_results}(d) shows training shifting toward PV-rich periods, while inference follows service requests. Around 14:00--15:00, an inference spike coincides with reduced training power. This pattern illustrates coordination through shared electrical infrastructure despite separate computing-resource pools.

\subsection{Ablation Analysis}

The ablations use the baseline hardware and inputs, with the electrical, computing-service, or on-site resource settings changed as specified below.

\subsubsection{Impact of Electrical Constraints}

Table~\ref{tab:security_ablation} compares the baseline with energy-only scheduling, which retains the site and pool power budgets, energy balance, grid-import limit, and no-export condition but omits the four-node electrical constraints. Including these constraints reduces the scheduled peak load from 75.4 to 44.0~MW and PV utilization from 100\% to 87.9\%, while increasing grid purchases and cost. The energy-only peak exceeds the 45.20~MW necessary voltage-based bound discussed below and is therefore not AC-feasible under the case assumptions. In the ex-post AC states, the node-1 upper-voltage and node-2 lower-voltage limits are reached within $10^{-4}$~p.u. in 36 and 34 periods, respectively, while maximum IBR modulation and current utilizations are 93.7\% and 55.0\%. These repeated voltage-boundary contacts highlight the importance of voltage feasibility in shaping the baseline schedule.

\begin{table}[!t]
\centering
\begin{threeparttable}
\caption{Effects of the SOCP-relaxed converter-aware electrical constraints on operating cost, grid energy purchase, peak DC load, and photovoltaic utilization}
\label{tab:security_ablation}
\footnotesize
\renewcommand{\arraystretch}{1.15}
\setlength{\tabcolsep}{4pt}
\begin{tabular}{@{}lcccc@{}}
\hline\hline
\textbf{Scenario} &
\textbf{\shortstack{Cost \\ (kUSD)}} &
\textbf{\shortstack{Grid Purchase \\ (MWh)}} &
\textbf{\shortstack{Peak DC Load \\ (MW)}} &
\textbf{\shortstack{PV Utilization \\ (\%)}} \\
\hline
Energy-only & 4.2 & 88.8 & 75.4 & 100.0 \\
Baseline & 5.9 & 145.4 & 44.0 & 87.9 \\
\hline\hline
\end{tabular}
\begin{tablenotes}[flushleft]
\scriptsize
\item[] Note: Energy-only retains the site energy constraints but omits the SOCP-relaxed converter-aware electrical constraint set. Baseline includes this constraint set. All other settings are identical.
\end{tablenotes}
\end{threeparttable}
\end{table}

\subsubsection{Role of Training Flexibility}

The fixed-training benchmark distributes each task uniformly over its window at the reference energy intensity, replacing task-level configuration decisions. Its 9.82--22.08~MW IT profile remains below the 33.06~MW training-pool limit. Eliminating the zero-injection coupling bus gives a case-specific necessary site-load bound of 45.20~MW from the AC voltage limits and fixed load power factor. Even using a lower bound on full-service inference power from the convexified pool model, fixed training exceeds this limit in 26 periods, with a peak of 65.11~MW. If shortages are permitted, any AC-feasible fixed-training schedule must leave at least 5.92\% of daily inference demand unserved. The bottleneck is therefore the shared electrical operating boundary rather than the training-pool power capacity. The coordinated baseline instead achieves negligible inference shortage and passes the original AC check. Thus, training flexibility mitigates a conflict between inference service and electrical feasibility, rather than merely shifting energy purchases.

\subsubsection{Impact of On-Site Resources}

Table~\ref{tab:resource_ablation} shows that removing PV raises grid purchases from 145.4 to 853.9~MWh and cost from 5.9 to 26.0~kUSD, whereas removing ES lowers PV utilization from 87.9\% to 63.0\%. PV reduces dependence on grid energy, while ES increases the usable share of on-site generation. Without PV, ES lowers cost from 28.5 to 26.0~kUSD but increases purchased energy from 835.1 to 853.9~MWh. Storage therefore provides an economic benefit through energy shifting even when total grid purchases increase.

\begin{table}[!t]
\centering
\begin{threeparttable}
\caption{Effects of photovoltaic generation and energy storage under
the same computing-load and converter-aware electrical settings}
\label{tab:resource_ablation}
\footnotesize
\renewcommand{\arraystretch}{1.15}
\setlength{\tabcolsep}{4pt}
\begin{tabular}{@{}lccc@{}}
\hline\hline
\textbf{Scenario} &
\textbf{\shortstack{Cost (kUSD)}} &
\textbf{\shortstack{Grid Purchase (MWh)}} &
\textbf{\shortstack{PV Utilization (\%)}} \\
\hline
Baseline & 5.9 & 145.4 & 87.9 \\
No PV & 26.0 & 853.9 & -- \\
No ES & 11.2 & 324.7 & 63.0 \\
DC only & 28.5 & 835.1 & -- \\
\hline\hline
\end{tabular}
\begin{tablenotes}[flushleft]
\scriptsize
\item[] Note: The baseline includes both PV and ES; the No PV scenario includes only ES; the No ES scenario includes only PV; and the DC only scenario includes neither PV nor ES. All other settings are unchanged.
\end{tablenotes}
\end{threeparttable}
\end{table}

\subsection{Capacity-Mix Sensitivity}

Table~\ref{tab:pool_ratio} compares training-share scenarios in which both resource pools and workload demands are rescaled. At 30\% and 40\%, inference-shortage penalties dominate cost. At 50\%, 60\%, and 70\%, shortage is negligible at the reported precision, but higher training shares increase grid purchases and cost. The cost driver thus shifts from service shortfall to energy supply across these scenarios. Among these jointly rescaled scenarios, the 50\% training share gives the lowest operating cost.

\begin{table}[!t]
\centering
\begin{threeparttable}
\caption{Capacity-mix scenarios with jointly rescaled resource pools and workload demands}
\label{tab:pool_ratio}
\footnotesize
\renewcommand{\arraystretch}{1.15}
\setlength{\tabcolsep}{4pt}
\begin{tabular}{@{}lccc@{}}
\hline\hline
\textbf{\shortstack{Training Share}} &
\textbf{\shortstack{Cost (kUSD)}} &
\textbf{\shortstack{Grid Purchase \\ (MWh)}} &
\textbf{\shortstack{Inference Shortage \\ ($10^3$~$\mathrm{TFLOP/s}\cdot\mathrm{h}$)}} \\
\hline
30\% & 12865.9 & 174.2 & 257.2 \\
40\% & 997.6 & 138.6 & 19.8 \\
50\% & 5.9 & 145.4 & 0 \\
60\% & 6.6 & 160.3 & 0 \\
70\% & 8.0 & 197.7 & 0 \\
\hline\hline
\end{tabular}
\begin{tablenotes}[flushleft]
\scriptsize
\item[] Note: Training share is the fraction of the site power budget assigned to training, not a fixed-node-count fraction. Both pool sizes and demands are rescaled. Total cost includes shortage penalties; shortages below the displayed precision are rounded to zero.
\end{tablenotes}
\end{threeparttable}
\end{table}

\subsection{Aggregation Accuracy and AC-Feasibility Verification}\label{sec:verification}

The inference aggregation is evaluated against an integer node-count model. The maximum relative power gap decreases from approximately 4.9\% at $N=10$ to 0.451\%, 0.0448\%, and 0.00541\% at $N=10^2$, $10^3$, and $10^4$, respectively, approaching zero at $10^5$. The diminishing relative gap supports aggregation for large inference pools and is consistent with the conditional bound in Theorem~\ref{thm:inference_error_bound}.

The maximum absolute and normalized voltage-product residuals are $6.056\times10^{-2}$ and 6.08\%, respectively, both on branch $(1,\mathrm{b})$. The normalized residual measures relaxation slack relative to the product of the corresponding squared voltage magnitudes.

An ex-post check fixes the scheduling commands and restores the original AC voltage-product equalities and operating limits. All 96 baseline periods pass, with a maximum equality residual of $9.36\times10^{-8}$~p.u. and no reported inequality violation. The check confirms that the baseline scheduling commands admit feasible operating points under the original AC model despite the nonzero lifted-voltage residuals.

Under the lower-bound completion constraint~\eqref{eq:training_completion}, the baseline meets the training requirements with an aggregate over-completion of 0.0674\%. A separately constructed schedule also meets the exact work requirement of every task while retaining baseline inference service and grid purchases; all 96 periods pass the original AC check. These results show that exact task completion is compatible with the electrical operating limits for the tested inputs.

\section{Conclusion}

This paper develops an intraday scheduling methodology for BTM DCs around the distinction between computing-side schedulability and electrically realizable power flexibility. The central difficulty is to retain heterogeneous service requirements and discrete workload choices while coordinating them with nonlinear converter and network operating conditions. The proposed model addresses this difficulty through different representations at the task, resource-pool, and electrical-network levels, rather than treating the DC as an unconstrained adjustable load.

For synchronous training, task-level configuration decisions preserve parallel-scale choices, continuous within-segment throughput, and completion deadlines. For weakly coupled inference, feasible-set aggregation and resource-pool convexification replace unit-level binary decisions with segment proportions. The resulting linear model scales with the number of operating segments and has a conditional $\mathcal{O}(1/N)$ relative relaxation-gap bound. On the electrical side, a four-node model connects computing demand and on-site injections to PCC exchange and equipment limits. Exact conic representations of converter current and modulation limits, combined with SOCP relaxation of voltage products, allow these relationships to be embedded in mixed-integer multi-period scheduling.

The case studies show that electrical constraints reshape the cost-minimizing schedule and renewable utilization relative to energy-only optimization. PV and ES complement workload adjustment by supplying and shifting on-site energy. Under the case parameters, fixed training conflicts with full inference service within the AC voltage limits; any feasible fixed-training solution must therefore incur service loss. The coordinated baseline instead achieves negligible inference shortage and passes the ex-post AC check despite nonzero SOCP residuals. The observed voltage-limit activity explains why energy balance alone is insufficient to assess the use of computing flexibility.

Overall, the methodology shows how workload timing, resource allocation, and on-site energy decisions must be coordinated with electrical limits to realize the operational value of computing flexibility.

\end{document}